\newcommand{\Cl}{\mathop{\mathrm{Cl}}}
\newtheorem{thm}{Theorem}[section]
\newtheorem{cor}[thm]{Corollary}
\newtheorem{lem}[thm]{Lemma}
\newtheorem{prop}[thm]{Proposition}
\newtheorem{defn}[thm]{Definition}
\newtheorem{ex}{Example}
\begin{document}
\title{Rank-3 root systems induce  root systems of rank 4 via a new Clifford spinor construction}

\author{Pierre-Philippe Dechant}

\address{Department of Mathematics, University of Durham,\\ Institute for Particle Physics Phenomenology, Ogden Centre for Fundamental Physics, Department of Physics, University of Durham, South Rd, Durham DH1 3LE, UK and \\Department of Mathematics, York Centre for Complex Systems Analysis, University of York, Heslington, York YO10 5GG, UK}

\ead{ppd22@cantab.net}

\begin{abstract}
	In this paper, we show that via a novel construction every rank-3 root system induces a root system of rank 4. 
	Via the Cartan-Dieudonn\'e theorem, an even number of successive Coxeter reflections yields rotations that in a Clifford algebra framework  are described by spinors.
	In three dimensions these spinors themselves have a natural four-dimensional Euclidean structure, and discrete spinor groups can therefore be interpreted as  4D polytopes.
	In fact, we show that these polytopes have to be root systems, thereby inducing Coxeter groups of rank 4, and that their automorphism groups include two factors of the respective discrete spinor groups trivially acting on the left and on the right by spinor multiplication.
	Special cases of this general theorem include the exceptional 4D groups $D_4$, $F_4$ and $H_4$, which therefore opens up a new understanding of applications of these structures in terms of spinorial geometry. In particular, 4D groups are ubiquitous in high energy physics.
	For the corresponding case in two dimensions, the groups $I_2(n)$ are shown to be self-dual, whilst
	 via a similar construction in terms of octonions  each rank-3 root system induces a root system in dimension 8;  this root system is in fact the direct sum of two copies of the corresponding induced 4D root system.
\end{abstract}

\section{Introduction}\label{HGA_intro}

%Platonic solids and regular polytopes and Coxeter groups - but only one of them a root system. But their symmetries are giving rise to the 4D counterparts which are all root systems (or their duals)

Root systems are useful mathematical abstractions, which are polytopes that generate reflection (Coxeter) groups. Certain families of root systems exist in any dimension, whereas others -- exceptional ones -- only exist as accidental structures in specific dimensions. Root systems in different dimensions are largely thought to be independent of each other (with the exception of sub-root systems). The Freudenthal-Tits magic square \cite{Baez2001Octonions} makes some non-trivial connections, but geometric insight as to why these should exist is scarce. In this paper, we present a novel connection between root systems in different dimensions that has a geometric origin. 

Clifford's Geometric Algebra provides a mathematical framework that generalises the more familiar vector space and matrix methods. In this setup, orthogonal transformations are encoded (in fact, doubly covered) by {versors}, which are the Clifford {geometric product} of several unit vectors, via a {sandwiching prescription}. In particular, the rotations (i.e. the special orthogonal group) are doubly covered by geometric products of an even number of unit vectors, resulting in {spinors}, or {`rotors'}.  These elements in the even-subalgebra  can themselves have a Euclidean structure and can thus be reinterpreted as vectors in a different space. Here, we show that such a construction can induce a root system (in the same or another dimension) from a given root system. This systematises the observations made in \cite{Dechant2012CoxGA, Dechant2012AGACSE, Dechant2013Platonic},
%\cite{Dechant2012CoxGA}\cite{Dechant2012AGACSE}\cite{Dechant2013Platonic}, 
and opens up a new -- spinorial -- view of the geometry of root systems (where by spinorial we mean something related to the double cover of the rotations i.e. special orthogonal transformations), with interesting applications particularly concerning the interplay of three and four dimensions, and notably the exceptional root systems $D_4$, $F_4$ and $H_4$:  $D_4$, the root system related to $SO(8)$ with exceptional triality symmetry, $F_4$, the largest crystallographic group in 4D, and  the largest non-crystallographic group, $H_4$.

The rest of this paper is organised as follows. Section \ref{HGA_Cox} introduces Coxeter groups and root systems, and Section \ref{HGA_GA} gives the necessary  Clifford algebra background. The Induction Theorem is stated and proven in Section \ref{HGA_4D}, and the self-duality of the two-dimensional case is discussed in Section \ref{HGA_2D}. We extend our approach to a related octonionic construction in Section \ref{HGA_Oct}.
Conclusions are given in Section \ref{HGA_Concl}.

\section{Coxeter Groups}\label{HGA_Cox}

\begin{defn}[Coxeter group] A {Coxeter group} is a group generated by some involutive generators $s_i, s_j \in S$ subject to relations of the form $(s_is_j)^{m_{ij}}=1$ with $m_{ij}=m_{ji}\ge 2$ for $i\ne j$. 
\end{defn}
The  finite Coxeter groups have a geometric representation where the involutions are realised as reflections at hyperplanes through the origin in a Euclidean vector space $\mathcal{E}$ (essentially  just the classical reflection groups). In particular, let $(\cdot \vert \cdot)$ denote the inner product in $\mathcal{E}$, and $\lambda$, $\alpha\in\mathcal{E}$.  
\begin{defn}[Reflections and roots] The generator $s_\alpha$ corresponds to the {reflection}
\begin{equation}\label{reflect}
s_\alpha: \lambda\rightarrow s_\alpha(\lambda)=\lambda - 2\frac{(\lambda|\alpha)}{(\alpha|\alpha)}\alpha
\end{equation}
 at a hyperplane perpendicular to the  {root vector} $\alpha$.
\end{defn}

The action of the Coxeter group is  to permute these root vectors, and its  structure is thus encoded in the collection  $\Phi\in \mathcal{E}$ of all such roots, which form a root system: 
\begin{defn}[Root system] \label{DefRootSys}
{Root systems} are defined by the  two axioms
\begin{enumerate}
\item $\Phi$ only contains a root $\alpha$ and its negative, but no other scalar multiples: $\Phi \cap \mathbb{R}\alpha=\{-\alpha, \alpha\}\,\,\,\,\forall\,\, \alpha \in \Phi$. 
\item $\Phi$ is invariant under all reflections corresponding to vectors in $\Phi$: $s_\alpha\Phi=\Phi \,\,\,\forall\,\, \alpha\in\Phi$.
\end{enumerate}
\end{defn}

Root systems and their associated Coxeter groups describe polyhedral symmetries such as the symmetries of the Platonic Solids, and  are also central in Lie Theory \cite{Kac1994InfDimLA, FuchsSchweigert1997}.
%\cite{Kac1994InfDimLA}\cite{FuchsSchweigert1997}.
 They have manifold practical applications, for instance to crystals and quasicrystals \cite{Twarock:2002a, Koca:1998, DechantTwarockBoehm2011E8A4}, 
%\cite{Twarock:2002a}\cite{Koca:1998}, 
the structure of viruses  \cite{DechantTwarockBoehm2011H3aff} and other polyhedral objects 
\cite{Twarock:2002b, Wardman2011CarbonOnion, Koca2007Polyhedra, Koca2011Chiral}, 
%\cite{Twarock:2002b}\cite{Koca2007Polyhedra} \cite{Koca2011Chiral}, 
as well as high energy theory 
\cite{Gross1985HE, DamourHenneauxNicolai2002E10, Koca2001H4E8} 
%\cite{Gross1985HE}\cite{DamourHenneauxNicolai2002E10}\cite{Koca2001H4E8} 
and singularities \cite{Shcherbak:1988,HennauxPersson2008SpacelikeSingularitiesAndHiddenSymmetriesofGravity}.
%\cite{Shcherbak:1988}\cite{HennauxPersson2008SpacelikeSingularitiesAndHiddenSymmetriesofGravity}.
A recent characterisation and visualisation of the relevant groups in two and three dimensions in a conformal framework was given in \cite{Hestenes2002PointGroups, Hestenes2002CrystGroups, Hitzer2010CLUCalc}.
Some of the 4-dimensional root systems have unusual accidental properties, such as triality in $D_4$ (or better known in the physics literature as the triality between vector and spinor representations of $SO(8)$, important in showing the equivalence of the Green-Schwarz and Ramond-Neveu-Schwarz strings), the largest non-crystallographic Coxeter group $H_4$ (which contains the GUT group $A_4=SU(5)$ as a subgroup), or the largest symmetry group $F_4$ of a  lattice in four dimensions (which also contains the little groups of String and M-Theory, $D_4$ and $B_4$, respectively). These peculiarities are discussed in more detail elsewhere 
\cite{Dechant2012CoxGA, Dechant2013Platonic}
%\cite{Dechant2012CoxGA}\cite{Dechant2013Platonic}
 -- here, we stress the general nature of our argument without the need to discuss individual root systems and their applications.

\section{Geometric Algebra}\label{HGA_GA}

The study of Clifford algebras and Geometric Algebra originated with Grassmann's, Hamilton's and Clifford's geometric work 
\cite{Grassmann1844LinealeAusdehnungslehre, Hamilton1844, Clifford1878}.
%\cite{Grassmann1844LinealeAusdehnungslehre}\cite{Hamilton1844}\cite{Clifford1878}.
 However, the geometric content of the algebras was soon lost when interesting algebraic properties were discovered in mathematics, and Gibbs advocated the use of the hybrid system of vector calculus %and quaternions
 in physics. When Clifford algebras resurfaced in physics in the context of quantum mechanics, it was purely for their algebraic properties, and this continues in particle physics to this day. Thus, it is widely thought that Clifford algebras are somehow intrinsically quantum mechanical in nature. The original geometric meaning of Clifford algebras has been revived in the work of David Hestenes \cite{Hestenes1966STA, HestenesSobczyk1984, Hestenes1990NewFound}.
%\cite{Hestenes1966STA}\cite{HestenesSobczyk1984}\cite{Hestenes1990NewFound}.
 Here, we follow an exposition along the lines of  \cite{LasenbyDoran2003GeometricAlgebra}.

In a manner reminiscent of complex numbers carrying both real and imaginary parts in the same algebraic entity, one can consider the 
geometric product of two vectors defined as the sum of their scalar (inner/symmetric) product and  wedge (outer/ exterior/antisymmetric) product
\begin{equation}\label{in2GP}
    ab\equiv a\cdot b + a\wedge b.
\end{equation}
The wedge product is the outer product introduced by Grassmann, as an antisymmetric product of two vectors, which  naturally defines a plane. Unlike the constituent inner and outer products, the geometric product is invertible, as $a^{-1}$ is simply given by $a^{-1}=a/(a^2)$. This leads to many algebraic simplifications over standard vector space techniques, and also feeds through to the differential structure of the theory, with Green's function methods that are not achievable with vector calculus methods.
This geometric product can be extended to the product of more vectors via associativity and distributivity, resulting in higher grade objects called multivectors.   There are a total of $2^n$ elements in the algebra, since it truncates at grade $n$ multivectors due to the scalar nature of the product of  parallel vectors and the antisymmetry of orthogonal vectors. Essentially, a Clifford algebra is a deformation of the exterior algebra by a quadratic form, and for a Geometric Algebra this is the metric of space(time).

The geometric product provides a very compact and efficient way of handling reflections in any number of dimensions, and thus by the Cartan-Dieudonn\'e theorem also rotations  \cite{Garling2011}.  For a unit vector $n$, we consider the reflection of a vector $a$ in the hyperplane orthogonal to $n$. Thanks to the geometric product in Clifford algebra the two terms in Eq. (\ref{reflect}) combine into a single term and thus a `sandwiching prescription': 
\begin{thm}[Reflections]\label{HGA_refl}
In Geometric Algebra, a vector `$a$' transforms under a reflection in the (hyper-)plane defined by a unit normal vector `$n$' as
	\begin{equation}\label{in2refl}
	  a'=-nan.
	\end{equation}
\end{thm}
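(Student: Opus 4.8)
The plan is to reduce the sandwiching expression $-nan$ to the classical reflection formula of Eq.~(\ref{reflect}) by splitting $a$ into parts parallel and perpendicular to the unit normal $n$. First I would record the single fact about $n$ that drives the whole computation: since $n$ is a unit vector, $n^2 = n\cdot n + n\wedge n = n\cdot n = 1$ by Eq.~(\ref{in2GP}), the outer product of a vector with itself vanishing.

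Next I would decompose $a = a_\parallel + a_\perp$, where $a_\parallel$ is the orthogonal projection of $a$ onto $n$ and $a_\perp$ lies in the hyperplane orthogonal to $n$. The key step is to extract the algebraic meaning of these two pieces from the geometric product~(\ref{in2GP}). For the parallel part, $n\wedge a_\parallel = 0$, so $n a_\parallel = n\cdot a_\parallel$ is a scalar, whence $n a_\parallel = a_\parallel n$: the two vectors commute. For the perpendicular part, $n\cdot a_\perp = 0$, so $n a_\perp = n\wedge a_\perp = -a_\perp\wedge n = -a_\perp n$: the two vectors anticommute.

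With these relations in hand the computation is immediate. I would push the leading $n$ through $a$ term by term, using $n^2=1$ at the last step:
\begin{equation}
-nan = -n a_\parallel n - n a_\perp n = -a_\parallel n^2 + a_\perp n^2 = -a_\parallel + a_\perp .
\end{equation}
This is exactly the transformation that fixes the hyperplane and flips the normal direction. To match Eq.~(\ref{reflect}) verbatim, I would finally write $a_\parallel = (a\cdot n)\,n$, with $a\cdot n = (a|n)$ the inner product of the Euclidean structure, so that $-a_\parallel + a_\perp = a - 2a_\parallel = a - 2(a|n)\,n$, which is precisely $s_n(a)$ for the unit root $n$.

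I do not expect a genuine obstacle here; the substance of the argument is the translation between the geometric-product axiom~(\ref{in2GP}) and the commutation/anticommutation behaviour of $n$ against the two components of $a$. The only points requiring care are the sign bookkeeping in the anticommuting case and fixing the convention so that the prescription reflects in the hyperplane orthogonal to $n$ (equivalently, reverses the component along $n$) rather than through the line spanned by $n$.
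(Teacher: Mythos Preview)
Your argument is correct. The paper does not actually supply a proof of this theorem: it merely remarks, in the sentence preceding the statement, that ``the two terms in Eq.~(\ref{reflect}) combine into a single term'' under the geometric product, and then moves on.

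Your parallel/perpendicular decomposition is one of the two standard ways to unpack this remark. The route more literally suggested by the paper's wording is the purely algebraic one: from Eq.~(\ref{in2GP}) one has $na = n\cdot a + n\wedge a$ and $an = n\cdot a - n\wedge a$, so $na + an = 2(n\cdot a)$; substituting $na = 2(n\cdot a) - an$ gives
\[
-nan \;=\; -\bigl(2(n\cdot a)-an\bigr)n \;=\; a\,n^2 - 2(n\cdot a)\,n \;=\; a - 2(a|n)\,n,
\]
which is exactly Eq.~(\ref{reflect}) for a unit root. Your decomposition makes the geometry (fixing the hyperplane, flipping the normal) more transparent; the identity $na+an=2(n\cdot a)$ is quicker and matches the paper's ``two terms combine'' phrasing. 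Either way there is no gap.
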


This is a remarkably compact and simple prescription for reflecting vectors in hyperplanes. More generally, higher grade multivectors of the form $M= ab\dots c$ (so-called versors) transform similarly (`covariantly'), as $M= ab\dots c\rightarrow \pm nannbn\dots ncn=\pm nab\dots cn=\pm nMn$. Even more importantly, from the  Cartan-Dieudonn\'e theorem, rotations are the product of successive reflections. For instance, compounding the reflections in the hyperplanes defined by the unit vectors $n$ and $m$ results in a rotation in the plane defined by $n\wedge m$.
\begin{prop}[Rotations]\label{HGA_rot}
In Geometric Algebra, a vector `$a$' transforms under a rotation in the plane defined by $n\wedge m$ via successive reflection in hyperplanes determined by the unit vectors `$n$' and `$m$' as 
	\begin{equation}\label{in2rot}
	  a''=mnanm=: Ra\tilde{R},
	\end{equation}
where we have defined $R=mn$ and the tilde denotes the reversal of the order of the constituent vectors $\tilde{R}=nm$.
\end{prop}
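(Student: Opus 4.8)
The plan is to obtain the rotation by simply composing two applications of the reflection formula from Theorem~\ref{HGA_refl} and then regrouping the geometric product using associativity. First I would reflect $a$ in the hyperplane with unit normal $n$, which by Eq.~\eqref{in2refl} gives $a'=-nan$. Next I would reflect the result in the hyperplane with unit normal $m$, obtaining $a''=-ma'm=-m(-nan)m=mnanm$. The crucial observation here is that the two minus signs introduced by the successive reflections cancel, so that the net transformation carries a plus sign; this sign bookkeeping is precisely what distinguishes an even number of reflections (orientation-preserving rotations) from an odd number (orientation-reversing reflections), in accordance with the Cartan--Dieudonn\'e theorem.

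Having arrived at $a''=mnanm$, the next step is purely formal: by associativity of the geometric product I would regroup the five factors as $mnanm=(mn)a(nm)$. Setting $R:=mn$ and noting that the reversal simply inverts the order of the constituent vectors, so that $\tilde{R}=nm$, this is exactly $a''=Ra\tilde{R}$, as claimed. It then remains to check that $R$ is a genuine rotor, i.e. that the sandwiching $R(\,\cdot\,)\tilde{R}$ is norm-preserving. Since $n$ and $m$ are unit vectors, $n^2=m^2=1$, and hence $R\tilde{R}=(mn)(nm)=mn^2m=m^2=1$, so $R$ is normalised.

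The only genuinely substantive point, rather than an obstacle, is the geometric identification of the transformation as a rotation in the stated plane. Decomposing $R=mn=m\cdot n+m\wedge n=\cos\theta+\sin\theta\,\hat{B}$, where $\theta$ is the angle between $n$ and $m$ and $\hat{B}$ is the unit bivector spanned by $n\wedge m$, exhibits $R$ as the exponential $e^{\theta\hat{B}}$ of a bivector and shows that the sandwiching implements a rotation by $2\theta$ in the plane $n\wedge m$. I expect no real obstacle in the derivation: the entire content is the cancellation of the two reflection signs together with the unit-normalisation of $R$, and the only care required is in tracking the order of the vectors under reversal.
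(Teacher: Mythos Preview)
Your argument is correct and is exactly the standard derivation: compose two applications of Eq.~\eqref{in2refl}, observe the sign cancellation, regroup by associativity, and identify $R=mn$ with $\tilde{R}=nm$. The additional checks you supply (that $R\tilde{R}=1$ and that $R=\cos\theta+\sin\theta\,\hat{B}$ exhibits the rotation by $2\theta$ in the plane $n\wedge m$) are also correct and standard.

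There is one thing you should be aware of for the comparison: the paper does \emph{not} actually give a proof of this proposition. It is stated as background material drawn from the Geometric Algebra literature (the exposition follows Doran and Lasenby), and no \texttt{proof} environment accompanies it. So there is no ``paper's own proof'' to compare against; your write-up simply fills in the omitted justification, and it does so along the lines any standard reference would.
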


\begin{thm}[Rotors and spinors]\label{HGA_thm_Rotor}
The object $R=mn$ generating the rotation in Eq. (\ref{in2rot}) is called a rotor. It satisfies $\tilde{R}R=R\tilde{R}=1$. Rotors themselves transform single-sidedly under further rotations, and thus form a multiplicative group under the geometric product, called the rotor group. Since $R$ and $-R$ encode the same rotation, the rotor group is a double-cover of the special orthogonal group, and is thus essentially the Spin group. Objects in Geometric Algebra that transform single-sidedly are called spinors, so that rotors are  normalised spinors. 
\end{thm}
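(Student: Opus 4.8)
The plan is to dispatch the statement as a chain of structural assertions, each following either from a one-line computation in the algebra or from the Cartan-Dieudonn\'e theorem already invoked in Theorem~\ref{HGA_refl} and Proposition~\ref{HGA_rot}. First I would establish the normalisation $\tilde R R = R\tilde R = 1$. With $R = mn$ and $\tilde R = nm$, associativity of the geometric product gives $\tilde R R = nm\,mn = n\,m^2\,n = n^2 = 1$, using that $m$ and $n$ are unit vectors, so $m^2 = m\cdot m = 1$ and $n^2 = 1$; the product $R\tilde R$ is handled identically. This simultaneously exhibits $\tilde R$ as the two-sided inverse $R^{-1}$.

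I would then simply read off the group axioms. Associativity is inherited from the geometric product, the scalar $1$ is the identity, and the preceding step supplies inverses, so closure is the only point needing an argument: for two rotors $R_1,R_2$ the anti-automorphism property of reversal gives $\widetilde{R_1R_2} = \tilde R_2\tilde R_1$, whence $(\widetilde{R_1R_2})(R_1R_2) = \tilde R_2(\tilde R_1R_1)R_2 = \tilde R_2 R_2 = 1$, and since a product of an even number of unit vectors is again such a product, the normalised even products form a group. The single-sided transformation law drops out of the same bookkeeping: applying $R$ and then $S$ to a vector sends $a \mapsto S(Ra\tilde R)\tilde S = (SR)\,a\,\widetilde{(SR)}$, so rotations compose by the left multiplication $R\mapsto SR$, in contrast to the two-sided sandwiching $a\mapsto Sa\tilde S$ that a vector itself undergoes; this is exactly the sense in which rotors, and more generally spinors, transform single-sidedly.

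For the double-cover statement I would introduce the map $\rho$ sending a rotor $R$ to the orthogonal map $a\mapsto Ra\tilde R$. That $\rho$ takes values in the special orthogonal group and is onto is precisely Cartan-Dieudonn\'e (every rotation is an even number of reflections), and that it is a homomorphism is the composition law just derived. Since $(-R)\,a\,\widetilde{(-R)} = Ra\tilde R$, the rotors $R$ and $-R$ have the same image, so $\rho$ is at least two-to-one, and the rotor group is consequently a double cover of $SO(n)$, i.e. the group $\Spin(n)$.

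The main obstacle is the remaining sharpening, namely that the fibres of $\rho$ are exactly $\{\pm R\}$, equivalently $\ker\rho = \{\pm 1\}$. Here I would argue that $Ra\tilde R = a$ for all vectors $a$ forces $Ra = aR$ via $\tilde R = R^{-1}$; since the vectors generate the whole algebra, $R$ must be central, and the centre of the Clifford algebra is spanned by the scalar together with, in odd dimensions, the pseudoscalar. As $R$ is an even element and the pseudoscalar either has odd grade (odd $n$) or anticommutes with vectors (even $n$), $R$ is forced to be a scalar, and the normalisation $\tilde R R = 1$ then pins it to $\pm 1$. Everything else is routine algebra; this final identification of the kernel is the only step where the grading and centre of the Clifford algebra genuinely enter.
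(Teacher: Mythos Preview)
Your argument is correct and well-organised: the normalisation computation, the closure via the anti-automorphism property of reversal, the single-sided composition law, and the kernel analysis identifying $\ker\rho=\{\pm 1\}$ are all sound. In particular your treatment of the kernel---reducing to centrality, then using the grading to exclude the pseudoscalar---is the standard clean way to sharpen ``at least two-to-one'' to ``exactly two-to-one''.

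However, there is nothing to compare against: the paper provides no proof of this theorem. It is stated as background material imported from the Geometric Algebra literature (the exposition explicitly follows \cite{LasenbyDoran2003GeometricAlgebra}), and functions largely as a definition of the terms \emph{rotor}, \emph{rotor group}, and \emph{spinor}. Your proposal therefore supplies substantially more than the paper does; if anything, the paper would regard the normalisation identity and the double-cover statement as textbook facts not requiring justification in context.
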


\begin{cor}[Discrete spinor groups]\label{HGA_disspin}
	Discrete spinor groups are of even order.
\end{cor}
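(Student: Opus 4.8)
The plan is to exhibit a canonical element of order two that lies in every non-trivial discrete spinor group, and then finish by a counting argument. The natural candidate is the spinor $-1$: by Theorem \ref{HGA_thm_Rotor} the rotor group is a double cover of the special orthogonal group, and $+1$ and $-1$ are precisely the two rotors lying over the identity rotation. If I can show that $-1$ belongs to the discrete group $G$, then $\{+1,-1\}$ is a subgroup of order $2$, and Lagrange's theorem forces $2$ to divide $|G|$. Equivalently, the map $R\mapsto -R$ is a fixed-point-free involution on the finite set $G$ (fixed-point-free because every rotor satisfies $\tilde{R}R=1\ne 0$, so $R\ne -R$), which partitions $G$ into pairs and makes $|G|$ even.

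The crux is therefore to verify that $-1$ genuinely lies in $G$, and not merely in the ambient Spin group. The key input is that a discrete spinor group is built from a root system, so by the first axiom of Definition \ref{DefRootSys} every root $\alpha$ is accompanied by its negative $-\alpha$. Normalising to unit vectors $n=\alpha/|\alpha|$, both $n$ and $-n$ are available as reflection-generating vectors, and their geometric product $n(-n)=-n^2=-1$ is an even versor, hence a bona fide element of the spinor group generated by even products of reflections. This realises $-1$ explicitly inside $G$ and closes the argument.

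I expect this membership claim to be the main obstacle, and it is worth stressing why it is not automatic: a single generating rotor need not by itself produce $-1$. For instance, the rotor $R=nm$ formed from two simple roots meeting at the Coxeter angle $\pi-\pi/m_{ij}$ induces a rotation of order $m_{ij}$, and when $m_{ij}$ is odd one computes $R^{m_{ij}}=+1$, so $\langle R\rangle$ is cyclic of \emph{odd} order and misses $-1$ entirely. It is only the closure of the full root system under negation --- equivalently, the availability of all root-pair products rather than merely those of the simple roots --- that forces $-1$ into $G$. Once its membership is secured the evenness of the order is immediate, and the identical reasoning applies to the four-dimensional and two-dimensional spinor groups considered in the later sections.
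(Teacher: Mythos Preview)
Your argument is correct. The paper supplies no proof of this corollary at all; it is simply asserted immediately after Theorem~\ref{HGA_thm_Rotor}, the implicit justification being the double-cover property (since $R$ and $-R$ represent the same rotation, group elements pair off). Your write-up makes this rigorous and, importantly, isolates the non-trivial point the paper leaves tacit: that $-1$ must actually belong to the discrete group in question, which you secure via the root-system axiom $-\alpha\in\Phi$ and the even product $n(-n)=-1$. Your side remark that a single rotor $R=nm$ with odd $m_{ij}$ generates an odd-order cyclic subgroup shows the statement would fail for arbitrary finite subgroups of the rotor group, so the root-system provenance is essential; the paper only invokes the corollary in that setting (Lemma~\ref{HGA_spinoreven}), so your reading of its intended scope is the correct one.
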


Higher multivectors transform in the above covariant, double-sided way as $ MN\rightarrow (RM\tilde{R})(R N \tilde{R})=RM\tilde{R}R N \tilde{R}=R(MN)\tilde{R}$.

	The Geometric Algebra of three dimensions $\Cl(3)$ spanned by three orthogonal unit vectors $e_1$, $e_2$ and $e_3$ contains three bivectors $e_1e_2$, $e_2e_3$ and $e_3e_1$ that square to $-1$, as well as the  highest grade object $e_1e_2e_3$   (trivector and pseudoscalar), which also squares to $-1$.
	\begin{equation}\label{in2PA}
	  \underbrace{\{1\}}_{\text{1 scalar}} \,\,\ \,\,\,\underbrace{\{e_1, e_2, e_3\}}_{\text{3 vectors}} \,\,\, \,\,\, \underbrace{\{e_1e_2=Ie_3, e_2e_3=Ie_1, e_3e_1=Ie_2\}}_{\text{3 bivectors}} \,\,\, \,\,\, \underbrace{\{I\equiv e_1e_2e_3\}}_{\text{1 trivector}}.
	\end{equation}

\begin{thm}[Quaternions and spinors of $\Cl(3)$]\label{HGA_quatBV}
The unit spinors $\lbrace 1,-Ie_1, -Ie_2, -Ie_3\rbrace$ of $\Cl(3)$ are isomorphic to the quaternion algebra $\mathbb{H}$. 
\end{thm}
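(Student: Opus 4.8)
The plan is to exhibit an explicit real-linear bijection between the two four-dimensional algebras and to check that it intertwines the two products; since both are associative unital $\mathbb{R}$-algebras, it suffices to verify agreement on a basis. First I would observe that the four unit spinors $\lbrace 1, Ie_1, Ie_2, Ie_3\rbrace$ are precisely the scalar together with the three bivectors $\lbrace e_2e_3, e_3e_1, e_1e_2\rbrace$ of Eq. (\ref{in2PA}), i.e. a real basis of the even subalgebra $\Cl^+(3)$, so that as a real vector space it matches the four dimensions of $\mathbb{H}=\mathrm{span}_{\mathbb{R}}\lbrace 1,i,j,k\rbrace$.

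Next I would compute the relevant products directly from the two defining rules of the geometric product in an orthonormal frame, namely $e_ke_k=1$ and $e_ke_l=-e_le_k$ for $k\neq l$. Writing $B_1=Ie_1=e_2e_3$, $B_2=Ie_2=e_3e_1$ and $B_3=Ie_3=e_1e_2$, the squares give $B_k^2=-1$ (for instance $B_1^2=e_2e_3e_2e_3=-e_2^2e_3^2=-1$), reproducing $i^2=j^2=k^2=-1$. The mixed products collapse whenever a repeated vector appears, e.g. $B_1B_2=e_2e_3e_3e_1=e_2e_1=-e_1e_2=-B_3$, and cyclically $B_2B_3=-B_1$ and $B_3B_1=-B_2$. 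This data fills in the entire multiplication table once bilinearity and associativity are invoked.

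The one genuine subtlety, and the main obstacle, is orientation: the bivectors satisfy $B_1B_2=-B_3$ cyclically, which is the mirror image of Hamilton's relation $ij=k$. Hence the naive assignment $i\mapsto Ie_1$, etc., would land in $\mathbb{H}^{\mathrm{op}}$ rather than in $\mathbb{H}$. I would resolve this by defining the map $\phi$ on generators as $\phi(1)=1$, $\phi(i)=-Ie_1$, $\phi(j)=-Ie_2$ and $\phi(k)=-Ie_3$, so that $\phi(i)\phi(j)=B_1B_2=-B_3=\phi(k)$ and likewise for the two cyclic partners, while the squares are unaffected by the sign change. (Equivalently, one could simply note that $\mathbb{H}^{\mathrm{op}}\cong\mathbb{H}$.) Since $\phi$ carries a basis to a basis it is a linear isomorphism, it manifestly preserves the unit, and it respects products on the generators and therefore everywhere by bilinearity and associativity; thus $\phi$ is the required algebra isomorphism between $\mathbb{H}$ and the unit spinors of $\Cl(3)$.
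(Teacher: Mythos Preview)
Your argument is correct: you identify $\Cl^{+}(3)$ with $\mathrm{span}_{\mathbb{R}}\{1,Ie_1,Ie_2,Ie_3\}$, compute the multiplication table, notice the orientation reversal $B_1B_2=-B_3$, and fix it by sending $i\mapsto -Ie_1$ etc.\ (equivalently by invoking $\mathbb{H}^{\mathrm{op}}\cong\mathbb{H}$). All of the computations check out, and the handling of the sign subtlety is clean.

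As to comparison with the paper: the paper does not actually supply a proof of this theorem. It is stated as a piece of standard background on $\Cl(3)$ and immediately followed by the sentence ``This completes the background that we shall need for our proof of the Induction Theorem.'' Your proposal therefore does not conflict with anything in the paper; it simply fills in an explicit verification that the paper elected to omit as well-known.
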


This completes the background that we shall need for our proof of the Induction Theorem.

\section{Induction Theorem}\label{HGA_4D}

In this section, we show that every root system of rank 3 induces a root system in dimension 4. 

\begin{prop}[$O(4)$-structure of spinors and quaternions]\label{HGA_O4}
The space of $\Cl(3)$-spinors and quaternions have a 4D Euclidean signature.
\end{prop}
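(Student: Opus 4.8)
The plan is to exhibit an explicit positive-definite inner product on the even subalgebra of $\Cl(3)$ and to check that the four unit spinors $\{1, Ie_1, Ie_2, Ie_3\}$ form an orthonormal basis with respect to it. First I would write a general spinor as a real linear combination $R = a_0 + a_1 Ie_1 + a_2 Ie_2 + a_3 Ie_3$ in the basis of Theorem~\ref{HGA_quatBV}, so that the spinor space is manifestly a four-dimensional real vector space with coordinates $(a_0,a_1,a_2,a_3)$. The task is then purely to identify the metric induced on these coordinates and to show it is Euclidean.

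Next I would introduce the natural norm supplied by the geometric product, namely $R\tilde{R}$, where the reversal fixes the scalar part and flips the sign of each bivector, giving $\tilde{R} = a_0 - a_1 Ie_1 - a_2 Ie_2 - a_3 Ie_3$. The key computation is to multiply out $R\tilde{R}$ using the facts recorded in Eq.~(\ref{in2PA}) that each bivector squares to $-1$ and that distinct bivectors anticommute; the mixed products then cancel pairwise since $a_ia_j$ is symmetric, leaving
\begin{equation*}
R\tilde{R} = a_0^2 + a_1^2 + a_2^2 + a_3^2.
\end{equation*}
This is precisely the quaternionic norm under the isomorphism of Theorem~\ref{HGA_quatBV}, and it is manifestly positive definite.

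From this I would define the associated symmetric bilinear form $(R|S) := \langle \tilde{R} S\rangle_0$, extracting the scalar part of $\tilde{R} S$. Symmetry follows because reversal fixes scalars and reverses order, so $\langle \tilde{R}S\rangle_0 = \langle \tilde{S}R\rangle_0$; a short expansion using that the scalar part of $(Ie_i)(Ie_j)$ equals $-1$ for $i=j$ and $0$ otherwise shows that $(R|S)=a_0b_0 + a_1b_1 + a_2b_2 + a_3b_3$, the standard inner product on $\mathbb{R}^4$, whose diagonal case $S=R$ recovers the computation above. With respect to this form the four basis spinors are orthonormal, so the spinor space is a genuine $4$D Euclidean space and the transformations preserving the norm constitute $O(4)$, as claimed.

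I expect the only real subtlety to be bookkeeping: verifying that every cross term in $R\tilde{R}$ cancels under the anticommutation and squaring relations, and confirming that $(R|S)$ is simultaneously symmetric and the polarisation of the norm $R\tilde{R}$. None of this is deep once the reversal and the bivector relations of Eq.~(\ref{in2PA}) are in hand; since Theorem~\ref{HGA_quatBV} already supplies the quaternion isomorphism, the Euclidean signature is essentially inherited from the standard norm on $\mathbb{H}$.
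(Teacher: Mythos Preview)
Your proposal is correct and follows essentially the same route as the paper: the paper's proof writes a general spinor as $\psi=a_0+a_1Ie_1+a_2Ie_2+a_3Ie_3$, computes $\psi\tilde{\psi}=a_0^2+a_1^2+a_2^2+a_3^2$, and notes the analogous quaternionic statement via conjugation, which is exactly your norm computation. Your additional discussion of the polarised bilinear form $(R|S)=\langle\tilde{R}S\rangle_0$ and the orthonormality of the basis spinors is more explicit than the paper's terse treatment, but it is an elaboration of the same idea rather than a different argument.
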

\begin{proof}
	For quaternions, this is given via  conjugation  defined by 
	$	\bar{q}=q_0-q_ie_i$,
	as
	$(p,q)=\frac{1}{2}(\bar{p}q+p\bar{q}),\,\,	|q|^2=\bar{q}q=q_0^2+q_1^2+q_2^2+q_3^2$.
	For a spinor $R=a_0+a_1Ie_1+a_2Ie_2+a_3Ie_3$, the norm is given by $R\tilde{R}=a_0^2+a_1^2+a_2^2+a_3^2$ (or via Theorem \ref{HGA_quatBV}), and the inner product is $(R_1, R_2)=\frac{1}{2}(R_1\tilde{R}_2+R_2\tilde{R}_1)$.
\end{proof}

\begin{lem}[Discrete quaternion groups give root systems]\label{HGA_quatgroups}
Any finite subgroup $G$ of even order in $\mathbb{H}$ is a root system.
\end{lem}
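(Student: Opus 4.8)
The plan is to check directly that $G$, regarded as a set of vectors in the four-dimensional Euclidean space supplied by Proposition \ref{HGA_O4}, satisfies the two axioms of Definition \ref{DefRootSys}. Two preparatory observations do most of the work. First, every element of $G$ is a \emph{unit} quaternion: the norm $|q|^2=\bar qq$ is multiplicative, so if $g\in G$ has finite order $n$ then $(|g|^2)^n=|g^n|^2=|1|^2=1$, forcing $|g|=1$. Thus $G$ lies on the unit sphere $S^3\subset\mathbb{H}\cong\mathbb{R}^4$. Second, the hypothesis that $|G|$ is even is exactly what guarantees that $G$ contains an element of order two (pair each element with its inverse; the identity is self-inverse, so at least one further self-inverse element must exist). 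The only unit quaternion other than the identity squaring to $1$ is $-1$, since a purely imaginary unit quaternion squares to $-1$. Hence $-1\in G$, and consequently $G$ is closed under negation, $-g=(-1)g\in G$.

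With these in hand the first axiom is immediate. For $g\in G$ every real multiple $cg$ lying on $S^3$ has $|c|=1$, i.e. $c=\pm1$, so $\mathbb{R}g\cap G=\{g,-g\}$, and both $g$ and $-g$ belong to $G$ by closure under negation. Thus $G$ contains no scalar multiple of a root beyond its negative.

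For the second axiom I would recast the reflection \eqref{reflect} in quaternionic form. Since $(g|g)=|g|^2=1$, the reflection in the hyperplane perpendicular to $g\in G$ is $s_g(\lambda)=\lambda-2(\lambda|g)g$, and using the conjugation and inner product of Proposition \ref{HGA_O4} this can be written compactly as
\begin{equation*}
s_g(\lambda)=-g\bar{\lambda}g.
\end{equation*}
One verifies this by splitting $\lambda$ into its parts parallel and perpendicular to $g$: the parallel part $g$ maps to $-g\bar gg=-g$, while for $(\lambda|g)=0$ the relation $\bar\lambda g=-\bar g\lambda$ gives $-g\bar\lambda g=g\bar g\lambda=\lambda$, exactly as a reflection requires. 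The key point is that for a unit quaternion $\bar\lambda=\lambda^{-1}$, so whenever $\lambda\in G$ we also have $\bar\lambda=\lambda^{-1}\in G$; combined with $-1\in G$ and $g\in G$, the image $s_g(\lambda)=(-1)\,g\,\lambda^{-1}\,g$ is a product of elements of $G$ and therefore lies in $G$. Hence $s_g(G)\subseteq G$, and since $s_g$ is an involutive bijection of the finite set $G$ we conclude $s_g(G)=G$ for every $g\in G$, which is the second axiom.

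The substance of the argument, and the step I expect to be the main obstacle to pin down cleanly, is the quaternionic reflection identity $s_g(\lambda)=-g\bar\lambda g$ together with the recognition that it expresses each reflected root as a product of group elements. Everything then hinges on the single consequence of even order, namely $-1\in G$: it is what closes $G$ under negation (needed both for the first axiom and to supply the sign in the reflection formula) and what converts the geometric reflection into the algebraic group multiplication used for the second axiom.
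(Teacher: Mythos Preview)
Your argument is correct: the two key ingredients are that $-1\in G$ (forced by even order) and the quaternionic reflection identity $s_g(\lambda)=-g\bar\lambda g$, which together reduce both root-system axioms to closure of $G$ under multiplication and inversion. The verification of the reflection identity via the parallel/perpendicular split is clean; one could also check it in one line by writing $\lambda+g\bar\lambda g=g(\bar g\lambda+\bar\lambda g)=2(\lambda|g)g$, using $g\bar g=1$.

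As for comparison: the paper does not actually give a proof here---it simply cites Humphreys \cite{Humphreys1990Coxeter}. Your direct verification is essentially the argument one finds there, so you have supplied what the paper outsources. In that sense your write-up is more self-contained than the paper's own treatment, while following the same underlying idea.
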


\begin{proof}
This is stated and proven in	\cite{Humphreys1990Coxeter}.
\end{proof}

\begin{lem}[Rank-3 Coxeter groups and finite even quaternion groups]\label{HGA_spinoreven}
	The spinors defined from any rank-3 Coxeter group are isomorphic to an even subgroup of the quaternions.
\end{lem}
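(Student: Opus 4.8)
The plan is to realise the rank-3 root system inside the grade-1 part of $\Cl(3)$, to identify the spinors of the associated Coxeter group with even products of unit root vectors, and then to check that these close up into a \emph{finite} subgroup of the unit quaternions of even order.

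First I would fix a rank-3 Coxeter group $W$ acting on $\mathcal{E}=\mathbb{R}^3\cong \Cl(3)$, normalising every root $\alpha\in\Phi$ to a unit vector so that the reflection $s_\alpha$ is exactly the sandwiching $\lambda\mapsto -\alpha\lambda\alpha$ of Theorem \ref{HGA_refl}. By the Cartan--Dieudonn\'e theorem together with Proposition \ref{HGA_rot} and Theorem \ref{HGA_thm_Rotor}, every element of the rotation subgroup $W^+\le W$ (of index $2$) is a product of an even number of reflections and is therefore implemented by a rotor $R=\alpha_{i_1}\alpha_{i_2}\cdots\alpha_{i_{2k}}$, that is, a geometric product of an even number of unit root vectors. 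Any such $R$ lies in the even subalgebra, spanned by $\{1,\,e_2e_3,\,e_3e_1,\,e_1e_2\}=\{1,\,Ie_1,\,Ie_2,\,Ie_3\}$, which by Theorem \ref{HGA_quatBV} is isomorphic to $\mathbb{H}$; hence each such spinor is a unit quaternion.

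Next I would verify that the set $\mathcal{S}$ of all spinors obtained in this way is a finite group. Closure is immediate, since a product of two even products of roots is again an even product of roots; associativity comes from the geometric product; the identity is $\alpha\alpha=1$; and inverses are supplied by the reversal, with $\tilde{R}R=1$ from Theorem \ref{HGA_thm_Rotor}. For finiteness, I would use that $R\mapsto(\lambda\mapsto R\lambda\tilde{R})$ defines a surjective homomorphism $\mathcal{S}\to W^+$ in which $R$ and $-R$ induce the same rotation; as $W^+$ is finite this forces $|\mathcal{S}|\le 2|W^+|<\infty$.

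Finally, Corollary \ref{HGA_disspin} gives that the discrete spinor group $\mathcal{S}$ has even order, which is precisely the hypothesis of Lemma \ref{HGA_quatgroups}; since the only element of order $2$ in the unit quaternions is $-1$, Cauchy's theorem even shows $-1\in\mathcal{S}$, so that the covering $\mathcal{S}\to W^+$ is genuinely two-to-one and $|\mathcal{S}|=2|W^+|$. This exhibits $\mathcal{S}$ as a finite subgroup of $\mathbb{H}$ of even order, as claimed. I expect the main obstacle to be this finiteness-and-closure bookkeeping --- pinning down that $\mathcal{S}$ really is a group and that its generators do not escape to an infinite set --- rather than the algebraic identification with $\mathbb{H}$, which is handed to us by Theorem \ref{HGA_quatBV}.
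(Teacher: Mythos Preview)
Your argument is correct and follows the same route as the paper: invoke Corollary~\ref{HGA_disspin} for evenness and Theorem~\ref{HGA_quatBV} for the identification with the quaternions. The paper's own proof is just these two citations; your explicit check of closure, inverses, and especially finiteness via the two-to-one cover $\mathcal{S}\to W^+$ (together with the Cauchy-theorem remark pinning down $-1\in\mathcal{S}$) fills in details the paper leaves implicit, but does not constitute a different approach.
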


\begin{proof}
From	Corollary	\ref{HGA_disspin}, the spinor group generated by a Coxeter group is discrete and even. Because of Theorem \ref{HGA_quatBV}, for a Coxeter group of rank 3 this spinor group is isomorphic to a finite even order quaternion group.
\end{proof}

\begin{lem}[Discrete spinor groups in 3D give 4D root systems]\label{HGA_spin4D}
	A discrete group of spinors in three dimensions is a four-dimensional root system.
\end{lem}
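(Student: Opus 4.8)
The plan is to assemble the statement from the structural results already established, reducing it to the quaternionic root-system lemma. The key observation is that a discrete group of spinors arising in three dimensions consists of \emph{rotors}: each element is a geometric product of an even number of unit vectors, so by Theorem \ref{HGA_thm_Rotor} it satisfies $\tilde{R}R=1$ and hence has unit norm. In the 4D Euclidean space supplied by Proposition \ref{HGA_O4} the whole group therefore lies on the unit $3$-sphere. Since this sphere is compact, a discrete subset of it is finite, so the discrete spinor group $G$ is in fact a finite group of unit spinors.

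First I would record that $|G|$ is even by Corollary \ref{HGA_disspin}; moreover, because $R$ and $-R$ encode the same rotation (Theorem \ref{HGA_thm_Rotor}), the group necessarily contains $-1$, and hence with every element $g$ it also contains $-g$. Next, using Theorem \ref{HGA_quatBV}, I would identify the unit spinors $\{1; Ie_1; Ie_2; Ie_3\}$ with the quaternion units, so that $G$ is realised as a finite even-order subgroup of $\mathbb{H}$ carrying the 4D Euclidean structure of Proposition \ref{HGA_O4}.

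With this identification the two axioms of Definition \ref{DefRootSys} can be verified. Axiom (1) is immediate: every element of $G$ has unit norm, so the only scalar multiples of $g$ lying in $G$ are $\pm g$. Axiom (2) is the substantive point and is precisely the content of Lemma \ref{HGA_quatgroups}: rewriting the reflection (\ref{reflect}) of one unit quaternion in the hyperplane orthogonal to another in terms of quaternion multiplication and conjugation — both of which preserve $G$, since for a unit quaternion the conjugate equals the inverse — shows that $s_\alpha(G)=G$ for every $\alpha\in G$. Invoking Lemma \ref{HGA_quatgroups} then yields that $G$ is a four-dimensional root system.

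The step I expect to be the main obstacle is the passage from \emph{discrete} to \emph{finite}: it hinges on the rotors all having unit norm, so that $G$ is trapped on the compact sphere $S^3$, which is exactly what forces a discrete group to be finite. Once finiteness and even order are secured, the remaining work is the closure of $G$ under reflection, and this is the computation already packaged in Lemma \ref{HGA_quatgroups}, so no further estimates are needed.
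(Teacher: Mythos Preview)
Your argument follows essentially the same route as the paper: identify the discrete spinor group with a finite even-order subgroup of $\mathbb{H}$ and then invoke Lemma~\ref{HGA_quatgroups}. The paper compresses your steps (2)--(3) into a single appeal to Lemma~\ref{HGA_spinoreven}, whereas you unpack that lemma by citing Corollary~\ref{HGA_disspin} and Theorem~\ref{HGA_quatBV} separately; you also make explicit the passage from \emph{discrete} to \emph{finite} via compactness of $S^3$, which the paper leaves implicit.

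One small point: your justification that $-1\in G$ (``because $R$ and $-R$ encode the same rotation'') is not a valid inference on its own---the double cover says nothing about which of $\pm R$ lies in a given subgroup. The correct reason is Cauchy's theorem together with the fact that $-1$ is the unique element of order~$2$ among unit quaternions. That said, this observation is superfluous: once you hand the finite even-order quaternion group to Lemma~\ref{HGA_quatgroups}, both root-system axioms are taken care of there, so your separate verification of Axiom~(1) can simply be dropped.
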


\begin{proof}
  Due to Lemma \ref{HGA_spinoreven}, the discrete spinor group is even and isomorphic to an even quaternion group. From Lemma	\ref{HGA_quatgroups} it follows that this is a root system. 
\end{proof}

\begin{thm}[Induction Theorem]
	Any rank-3 root system induces a  root system of rank 4.
\end{thm}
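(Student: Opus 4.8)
The plan is to assemble the machinery developed in the preceding lemmas into one construction, so that the theorem becomes essentially a chaining argument. First I would fix a rank-3 root system $\Phi$ and let $W$ be the Coxeter group it generates; since $\Phi$ is finite, $W$ is a finite reflection group acting on the three-dimensional Euclidean space $\mathcal{E}$. Working inside $\Cl(3)$, I would represent each generating reflection $s_\alpha$ by the sandwiching prescription of Theorem \ref{HGA_refl}, so that a reflection in the hyperplane orthogonal to a unit root vector $\alpha$ acts as $a \mapsto -\alpha a \alpha$.

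Next, invoking the Cartan--Dieudonn\'e theorem together with Proposition \ref{HGA_rot} and Theorem \ref{HGA_thm_Rotor}, I would form the rotors $R = \alpha_1 \alpha_2 \cdots \alpha_{2k}$ arising as geometric products of even numbers of unit root vectors. These rotors double-cover the rotation subgroup $W^+ \subset W$ and, because $W$ is finite, constitute a finite set closed under the geometric product, i.e. a discrete spinor group. By Corollary \ref{HGA_disspin} this group is of even order.

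The conclusion then follows from Lemma \ref{HGA_spin4D}: a discrete spinor group in three dimensions is a four-dimensional root system. Concretely, Theorem \ref{HGA_quatBV} and Proposition \ref{HGA_O4} identify the spinors with quaternions carrying a 4D Euclidean structure; Lemma \ref{HGA_spinoreven} identifies the spinor group with a finite even-order quaternion subgroup; and Lemma \ref{HGA_quatgroups} guarantees that such a subgroup is a root system. As a root system this configuration generates a rank-4 Coxeter group, which is the induced root system claimed.

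I expect the main subtlety to lie not in this final chaining, which is largely formal, but in verifying that the rotors genuinely form a \emph{finite} group closed under the geometric product. The delicate point is the passage from the rotation subgroup $W^+$ to its spinorial double cover: each rotation corresponds to the pair $\pm R$, so the cardinality doubles, and one must check that this binary group is precisely the discrete spinor group to which the quaternionic lemmas apply. A secondary point warranting care is confirming that the induced 4D configuration is non-degenerate, i.e. that the spinors span all four dimensions rather than a proper subspace, so that the induced root system has rank exactly $4$.
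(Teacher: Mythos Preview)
Your proposal is correct and follows essentially the same chaining argument as the paper: root system $\to$ Coxeter reflections via Eq.~(\ref{in2refl}) $\to$ spinors via Theorem~\ref{HGA_thm_Rotor} $\to$ even discrete spinor group via Corollary~\ref{HGA_disspin} $\to$ rank-4 root system via Lemma~\ref{HGA_spin4D}. Your version is in fact more careful than the paper's, which does not explicitly address the two subtleties you flag (finiteness/closure of the rotor group and the spinors spanning the full four-dimensional space); the paper simply takes both for granted.
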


\begin{proof}
A root system in three dimensions gives rise to corresponding Coxeter reflections (Section \ref{HGA_Cox}), acting in Geometric Algebra as given by Eq. (\ref{in2refl}).
An even number of successive reflections yields spinors via Theorem \ref{HGA_thm_Rotor}, and from Corollary \ref{HGA_disspin}, this group is even. Via Lemma \ref{HGA_spin4D}, this even spinor group yields a root system in four dimensions. 

Alternatively, one can check the root system axioms from Definition \ref{DefRootSys} directly. The first is satisfied, since trivially for an element in a discrete spinor group its negative is also in the discrete spinor group by Corollary \ref{HGA_disspin}. One can easily check the axiom of closure of the root system under reflections by using Eq. (\ref{reflect}) in 4D using the Euclidean inner product in Proposition \ref{HGA_O4}. For  spinors $R_1$ and $R_2$ this amounts to 
$R_2\rightarrow R_2'=R_2-2(R_1, R_2)/(R_1, {R}_1) R_1 =R_2-((R_1\tilde{R}_2+R_2\tilde{R}_1) R_1/(R_1\tilde{R}_1)=  -R_1\tilde{R}_2R_1/(R_1\tilde{R}_1)$.
Closure of the root system is thus ensured by closure of the spinor group. This also has very interesting consequences for the automorphism group of these root systems, which contains two factors of the spinor group acting from the left and the right \cite{Dechant2013Platonic} (in this sense, the above closure under reflections amounts to a certain twisted conjugation). 
\end{proof}

\begin{thm}[Automorphism groups of the induced root systems]
	The automorphism groups of these induced root systems contain two factors of the respective spinor group (i.e. the root system itself), acting on the left and on the right (trivially, in the spinorial picture, as group multiplication in the spinor group, by group closure).
\end{thm}

\begin{thm}[Non-existence of a reduction theorem]
	Not every rank-4 root system can be induced by a rank-3 root system. 
\end{thm}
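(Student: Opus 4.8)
The plan is to prove the non-surjectivity of the induction construction by exhibiting an explicit rank-4 root system that lies outside its image, backed by a counting argument on the number of roots. The induction map of the previous theorem takes a rank-3 root system, forms its spinor group via even products of the Coxeter reflections (Theorem \ref{HGA_thm_Rotor}), and reinterprets that group as a 4D root system (Lemma \ref{HGA_spin4D}). To disprove a converse, it suffices to produce one rank-4 root system that is never obtained this way.

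First I would enumerate the irreducible rank-3 root systems, of which there are exactly three: $A_3$, $B_3$ and $H_3$. For each, the spinor group produced by the even products of reflections is the corresponding binary polyhedral group -- the binary tetrahedral, binary octahedral and binary icosahedral groups, of orders $24$, $48$ and $120$ respectively. By Lemma \ref{HGA_quatgroups} these groups, viewed as finite subsets of $\mathbb{H}\cong\mathbb{R}^4$, are precisely the induced root systems, and a direct inspection identifies them as $D_4$, $F_4$ and $H_4$.

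The key observation is that the number of roots in the induced rank-4 root system equals the order of the spinor group, since each spinor is a single point of $\mathbb{R}^4$ and the discrete group is carried bijectively onto the root set. I would then exhibit $A_4$, which has $20$ roots, as a rank-4 root system that cannot be induced: no rank-3 root system yields a spinor group of order $20$. For the irreducible cases this is immediate from the orders $24$, $48$ and $120$ above, all of which exceed $20$ and none of which equals it; the reducible rank-3 root systems -- such as $A_1\oplus A_1\oplus A_1$, $A_1\oplus A_2$ and $A_1\oplus I_2(n)$ -- must be checked separately, and here I would argue that their spinor groups factor through the rank-1 and rank-2 sub-constructions, so that the resulting 4D root systems decompose as products and never reproduce the irreducible $A_4$ structure.

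The main obstacle will be making the enumeration of the image airtight, and in particular ruling out the reducible rank-3 root systems cleanly, since these too give rise to genuine (albeit reducible) 4D root systems via the same construction. Once the image is shown to miss at least one rank-4 root system such as $A_4$, the non-existence of a reduction theorem follows immediately.
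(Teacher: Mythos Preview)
Your approach is correct but takes a substantially different route from the paper. The paper's proof is a single sentence: it simply names $I_2(4)\times A_1\times A_1$ as a counterexample, with no further verification. You instead choose the irreducible counterexample $A_4$ and back it with a full enumeration of the image of the induction map, matching root counts against spinor-group orders. What your approach buys is an explicit description of \emph{everything} the construction can produce (namely $A_1^4$, $D_4$, $F_4$, $H_4$, and the family coming from $A_1\times I_2(n)$), which is more informative than a bare counterexample; the price is exactly the obstacle you flag, the reducible case $A_1\times I_2(n)$. That case does go through: the even products of reflections form the dicyclic group of order $4n$, whose elements lie in two orthogonal $2$-planes of $\mathbb{H}$, so the induced root system is the reducible $I_2(n)\times I_2(n)$ and cannot be $A_4$. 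The paper's choice of a reducible rank-$4$ counterexample sidesteps this entire analysis, but also leaves the reader to check for themselves that no rank-$3$ root system induces $I_2(4)\times A_1\times A_1$---a verification that in effect requires the same enumeration you carry out.
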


\begin{proof}
A counterexample is provided by $I_2(4)\times A_1 \times A_1$.
\end{proof}

In the above proof no reference was made to any particular root system, and it is thus valid in generality. However, the number of 
such root systems is limited, so that one can list the induced root systems on a case-by-case basis. Table \ref{tab_Symm} contains the list of 4D root systems that are induced by 3D root systems, as well as the results of later sections. It is particularly noteworthy that the exceptional four-dimensional root systems arise via our Clifford spinor construction; in particular, the fact that  irreducible root systems arise in this way. One might speculate philosophically whether their existence in some sense hinges on the accidentalness of this 3D Clifford spinor construction. 

\begin{ex}
	The simple roots of $A_1\times A_1 \times A_1$ can be taken as $\alpha_1=e_1$, $\alpha_2=e_2$ and $\alpha_3=e_3$. 
	Closure of these under reflections via Eq. (\ref{in2refl}) gives $(\pm 1, 0, 0)$ and permutations thereof, which are the 6 vertices of the root system, the octahedron.
Combining two reflections yields a spinor, so forming rotors according to $R_{ij}=\alpha_i\alpha_j$ gives, e.g. $R_{11}=\alpha_1^2=1\equiv(1,0,0,0)$, or $R_{23}=\alpha_2\alpha_3=e_2e_3=Ie_1\equiv (0,1,0,0)$, where we denote components as given in Theorem \ref{HGA_quatBV}. Explicit calculation of all cases generates the 8 permutations of $(\pm 1,0,0,0)$. When interpreted as a 4D polytope, these are the vertices of the 16-cell, which is the root system of 	$A_1\times A_1 \times A_1\times A_1$.
\end{ex}

\begin{ex}
Other specific examples are the inductions $A_3\rightarrow D_4$ (cuboctahedron to 24-cell), $B_3\rightarrow F_4$ and $H_3\rightarrow H_4$ (icosidodecahedron to 600-cell) \cite{Dechant2012CoxGA, Dechant2012AGACSE, Dechant2013Platonic}.
%\cite{Dechant2012CoxGA}\cite{Dechant2012AGACSE}\cite{Dechant2013Platonic}. 
This demonstrates how exceptional phenomena in dimension four could arise from regular root systems (i.e. $A_n$ and $B_n$ for $n=3$), and hints that a spinorial view of their geometry can shed light on many of their applications.
\end{ex}	

\section{Spinors in  dimension two}\label{HGA_2D}

		The space of spinors $\psi=a+be_1e_2\equiv a+bI$ in two-dimensional Euclidean space is also two-dimensional, and has a natural Euclidean structure given by $\psi\tilde{\psi}=a^2+b^2$. This induces a rank-2 root system from any rank-2 root system in a similar way to the construction above. However, this construction does not yield any new root systems by the following theorem.

\begin{thm}[Self-duality of $I_2(n)$]
 Two-dimensional root systems are self-dual under the Clifford spinor construction.
\end{thm}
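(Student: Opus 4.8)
The plan is to make the construction completely explicit in $\Cl(2)$ and simply read off the induced configuration. First I would fix a concrete realisation of $I_2(n)$: its $2n$ roots are the unit vectors $r_k=\cos(\pi k/n)\,e_1+\sin(\pi k/n)\,e_2$ for $k=0,1,\dots,2n-1$, with simple roots $\alpha_1=e_1=r_0$ and $\alpha_2=r_{n-1}$ subtending the angle $\pi-\pi/n$, so that $m_{12}=n$ and the product of the two simple reflections is a rotation by $2\pi/n$.

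Next I would compute the discrete spinor group produced by the construction. The key calculation is that the geometric product of two roots collapses to a single exponential,
\[
r_j r_k=\cos(\theta_k-\theta_j)+\sin(\theta_k-\theta_j)\,I=e^{I\pi(k-j)/n},\qquad \theta_k=\pi k/n .
\]
Since $k-j$ ranges over all residues modulo $2n$, the even products of reflections generate exactly the cyclic group $\{e^{I\pi m/n}:m=0,\dots,2n-1\}$ of $2n$-th roots of unity in the plane spanned by $1$ and $I$. In particular $-1=e^{I\pi}$ lies in the group, confirming the even order $2n$ guaranteed by Corollary \ref{HGA_disspin}.

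Then I would invoke the two-dimensional Euclidean structure $\psi\tilde\psi=a^2+b^2$ on spinors $\psi=a+bI$: the identification $\psi\mapsto(a,b)$ sends this spinor group to the $2n$ unit vectors $(\cos(\pi m/n),\sin(\pi m/n))$, which are $2n$ equally spaced unit vectors at angular separation $\pi/n$. These are precisely the roots of $I_2(n)$ again; the axioms of Definition \ref{DefRootSys} hold because the configuration is closed under negation and under each of its own reflections, being the standard dihedral configuration (equivalently, the even-order finite subgroups of the unit complex numbers are exactly the $I_2(n)$ root systems, mirroring Lemma \ref{HGA_quatgroups}). Hence the induced root system is isomorphic to the one we started from, which is the asserted self-duality.

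The step demanding the most care is the double-cover bookkeeping in the second paragraph. The rotations of $I_2(n)$ form only the cyclic group of order $n$, and the rotor attached to the product of the two \emph{simple} reflections generates a group of order $n$ when $n$ is odd, which could mislead one into an angular spacing of $2\pi/n$ and a spurious rank-$2$ system with only $n$ roots. The resolution is that the construction uses products of \emph{all} pairs of roots, and the resulting half-angle spinors fill in the intermediate $2n$-th roots of unity (equivalently, $-1$ is always present). This restores the factor of two, yielding spacing $\pi/n$ and reproducing $I_2(n)$ exactly, uniformly in the parity of $n$.
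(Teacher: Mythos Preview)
Your argument is correct and in essence parallels the paper's, but you take a longer computational route. You parametrise the roots explicitly, enumerate all pairwise products $r_jr_k=e^{I\pi(k-j)/n}$, identify the resulting spinor set as the cyclic group of $2n$-th roots of unity, and then read this back as the $I_2(n)$ configuration. The paper collapses all of this into a single bijection: fixing $\alpha_1=e_1$, the map $\alpha_i\mapsto \alpha_1\alpha_i=e_1\alpha_i=a_1+a_2I$ sends a root $(a_1,a_2)$ to the spinor with the \emph{same} coordinates, and every spinor in the discrete group arises this way. So no enumeration or exponential form is needed---the self-duality is literally the identity map in coordinates. What your version buys is explicit bookkeeping of the double cover (your final paragraph on why one obtains $2n$ spinors rather than $n$, uniformly in the parity of $n$), a point the paper leaves implicit.
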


\begin{proof}
		A 2D root vector $\alpha_i=a_1e_1+a_2e_2$ is in bijection with a spinor by $\alpha_i\rightarrow \alpha_1\alpha_i=e_1\alpha_i =a_1+a_2e_1e_2=a_1+a_2I$ (taking $\alpha_1=e_1$ without loss of generality). This is the same as forming a spinor between those two root vectors.  The infinite family of two-dimensional root systems $I_2(n)$ is therefore self-dual.  The order of the Coxeter group $|W|$ matches the number of roots $|\Phi|$, even for the well-known crystallographic cases in 2D:  for instance,  for the $A_n$ family one has the general formulae $|W|=(n+1)!$ and $|\Phi|=n(n+1)$, with equality for  $n=2$, as $A_2=I_2(3)$. For $B_2=I_2(4)$, $|W|=2^n n!$ and $|\Phi|=2n^2$, such that equality holds for $n=2$.  For $G_2=I_2(6)$, one also has $|W|=|\Phi|=12$.
\end{proof}

\section{An octonion construction: eight-dimensional root systems}\label{HGA_Oct}

The crucial fact that our 3D and 2D ($nD$) constructions depended on was that Clifford algebra allowed us to construct a group of spinors with $2^{n-1}$ components, that -- when thought of as a set of vectors in $2^{n-1}$-dimensional space -- fulfilled the first axiom of a root system, i.e. that $\alpha$ and $-\alpha$ are contained in the set, by construction. This $2^{n-1}$-dimensional space also had a Euclidean metric. The second part of the proof that these sets are in fact root systems was then to show closure under reflections in these roots as defined by the Euclidean metric. For instance, this is no longer the case in 4D, where the space of spinors is 8-dimensional.  However, here we follow an analogous construction in terms of octonions, that does yield root systems in 8D. 

The octonions $\mathbb{O}$ are a generalisation of the quaternions, complex and real numbers. It is a (non-associative division) algebra with one scalar and seven imaginary units that satisfy certain commutativity properties. In analogy with complex and quaternionic conjugation, there is an octonionic conjugation, which defines an 8-dimensional Euclidean metric. 

In order to mimic our Clifford construction above, we take the simple roots of the 3D root systems to be along the first three imaginary units $i$, $j$ and $k$, i.e. the simple roots of $A_1^3$ would be $\alpha_1=i$, $\alpha_2=j$ and $\alpha_3=k$. We then consider the closure of this set of octonions under octonion multiplication (here 16 octonions, namely the 8 unit octonions and their negatives). Again this gives a set (group) of octonions which contains $\pm \alpha_i$ by construction. It is also straightforward to show that these sets of octonions are closed under reflections with respect to the 8-dimensional Euclidean metric, as this reflection essentially amounts to octonion multiplication and  the above procedure of taking the closure under octonion multiplication therefore guarantees closure under reflections. The resulting set of vectors is therefore a root system in eight dimensions (here $A_1^8$). However, the resulting root system in each case is just the direct sum of two copies of the corresponding induced 4D root system, e.g. $H_3$ induces $H_4$ in 4D and $H_4 \oplus H_4$ in 8D. The argument in the paper is general and does not make reference to any particular root system. Table \ref{tab_Symm} therefore summarises the results for each of the limited number of root systems for illustrative purposes. 

\begin{table}
	\caption{\label{tab_Symm} Summary of the induced root systems in two, four and eight dimensions.}
\begin{centering}\begin{tabular}{l l}
\br
start root system&induced root system
\tabularnewline
\mr
$I_2(n)$&$I_2(n)$
\tabularnewline
\mr
$A_1\oplus I_2(n)$&$I_2(n)\oplus I_2(n)$
\tabularnewline
$A_3$&$D_4$
\tabularnewline
$B_3$&$F_4$
\tabularnewline
$H_3$&$H_4$
\tabularnewline
\mr
$A_1\oplus I_2(n)$&$I_2(n)\oplus I_2(n)\oplus I_2(n)\oplus I_2(n)$
\tabularnewline
$A_3$&$D_4\oplus D_4$
\tabularnewline
$B_3$&$F_4\oplus F_4$
\tabularnewline
$H_3$&$H_4\oplus H_4$
\tabularnewline
\br
\end{tabular}\par\end{centering}
\end{table}

\section{Conclusions}\label{HGA_Concl}
We have shown how via a Clifford spinor construction, any root system of rank 3 induces a root system in four dimensions. This was done in full generality without reference to any particular root system. However, since the number of root systems in 3D is finite, one gets a concrete list of cases. One finds that the 4D root systems induced in this way contain mostly the exceptional root systems that generate the exceptional Coxeter groups $D_4$ (triality), $F_4$ (largest crystallographic group in 4D) and $H_4$ (largest non-crystallographic group). This construction  therefore offers a novel perspective on exceptional phenomena in four dimensions and the peculiar structure of their automorphism groups, in terms of spinorial geometry. 

Via an analogous construction in terms of octonions, each such 3D root system also induces a root system in eight dimensions; however, this root system is reducible and in fact consists of two copies of the induced 4D root system in each case.  In the two-dimensional case, root systems (i.e. $I_2(n)$) were shown to be self-dual. This spinorial view sheds light on the peculiarities of root systems, in particular certain rank-4 root systems and their automorphism groups (see \cite{Dechant2012CoxGA, Dechant2013Platonic}), and more generally, opens up a new field of study in the spinorial geometry of root systems. This could be particularly interesting for applications in high energy physics, where the rank-4 groups are ubiquitous.

% ------------------------------------------------------------------------

\ack
I would like to thank my family and friends for their support and David Hestenes, Eckhard Hitzer, Anthony Lasenby, Joan Lasenby, Reidun Twarock, Mike Hobson and C\'eline B\oe hm for helpful discussions.

%\bibliography{/Users/ppd22/Dropbox/work_share/virus,/Users/ppd22/Dropbox/YCCSA/virobib}

\begin{thebibliography}{10}
\expandafter\ifx\csname url\endcsname\relax
  \def\url#1{{\tt #1}}\fi
\expandafter\ifx\csname urlprefix\endcsname\relax\def\urlprefix{URL }\fi
\providecommand{\eprint}[2][]{\url{#2}}
% Bibliography created with iopart-num v2.0
% /biblio/bibtex/contrib/iopart-num

\bibitem{Baez2001Octonions}
Baez J~C 2002 {\em Bull. Amer. Math. Soc.\/} {\bf 39} 145--205

\bibitem{Dechant2012CoxGA}
Dechant P~P 2013 {\em Advances in Applied Clifford Algebras\/} {\bf 23}
  301--321 ISSN 0188-7009 10.1007/s00006-012-0371-3 (\textit{Preprint}
  \eprint{1205.1451})
  \urlprefix\url{http://dx.doi.org/10.1007/s00006-012-0371-3}

\bibitem{Dechant2012AGACSE}
Dechant P~P 2014 {\em Advances in Applied Clifford Algebras\/} {\bf 24} 89--108
  ISSN 0188-7009 (\textit{Preprint} \eprint{1207.5005})
  \urlprefix\url{http://dx.doi.org/10.1007/s00006-013-0422-4}

\bibitem{Dechant2013Platonic}
Dechant P~P 2013 {\em Acta Crystallographica Section A: Foundations of
  Crystallography\/} {\bf 69} 592--602

\bibitem{Kac1994InfDimLA}
{Kac} V~G 1994 {\em {Infinite-Dimensional Lie Algebras}\/} ({\em Progress in
  mathematics\/} vol~44) (Cambridge university press)

\bibitem{FuchsSchweigert1997}
{Fuchs} J and {Schweigert} C 2003 {\em {Symmetries, {Lie} Algebras and
  Representations}\/} (Cambridge University Press, Cambridge)

\bibitem{Twarock:2002a}
Patera J and Twarock R 2002 {\em Journal of Physics A: Mathematical and
  General\/} {\bf 35} 1551--1574

\bibitem{Koca:1998}
Koca M, Koca N~O and {Ko\c{c}} R 1998 {\em Turkish Journal of Physics\/} {\bf
  22} 421--436

\bibitem{DechantTwarockBoehm2011E8A4}
Dechant P~P, Boehm C and Twarock R 2013 {\em Journal of Mathematical Physics\/}
  {\bf 54} 093508
  \urlprefix\url{http://scitation.aip.org/content/aip/journal/jmp/54/9/10.1063/1.4820441}

\bibitem{DechantTwarockBoehm2011H3aff}
Dechant P~P, Boehm C and Twarock R 2012 {\em Journal of Physics A: Mathematical
  and Theoretical\/} {\bf 45} 285202 (\textit{Preprint} \eprint{1110.5219})
  \urlprefix\url{http://stacks.iop.org/1751-8121/45/i=28/a=285202}

\bibitem{Twarock:2002b}
Twarock R 2002 {\em Phys. Lett. A\/} {\bf 300} 437--444

\bibitem{Wardman2011CarbonOnion}
Dechant P~P, Wardman J, Keef T and Twarock R 2014 {\em Acta Crystallographica
  Section A\/} {\bf 70} 162--167
  \urlprefix\url{http://dx.doi.org/10.1107/S2053273313034220}

\bibitem{Koca2007Polyhedra}
Koca M, Al-Ajmi M and Ko\c{c} R 2007 {\em Journal of Mathematical Physics\/}
  {\bf 48} 113514 (pages~14)
  \urlprefix\url{http://link.aip.org/link/?JMP/48/113514/1}

\bibitem{Koca2011Chiral}
{Koca} M 2011 {\em Journal of Physics Conference Series\/} {\bf 284} 012040
  (\textit{Preprint} \eprint{1006.3149})

\bibitem{Gross1985HE}
Gross D~J, Harvey J~A, Martinec E~J and Rohm R 1985 {\em Nucl.Phys.\/} {\bf
  B256} 253

\bibitem{DamourHenneauxNicolai2002E10}
Damour T, Henneaux M and Nicolai H 2002 {\em Physical Review Letters\/} {\bf
  89} 221601 (\textit{Preprint} \eprint{hep-th/0207267})

\bibitem{Koca2001H4E8}
Koca M, Koc R and Al-Barwani M 2001 {\em Journal of Physics A: Mathematical and
  General\/} {\bf 34} 11201
  \urlprefix\url{http://stacks.iop.org/0305-4470/34/i=50/a=303}

\bibitem{Shcherbak:1988}
Shcherbak O~P 1988 {\em Russian Mathematical Surveys\/} {\bf 43} 149
  \urlprefix\url{http://stacks.iop.org/0036-0279/43/i=3/a=R04}

\bibitem{HennauxPersson2008SpacelikeSingularitiesAndHiddenSymmetriesofGravity}
{Henneaux} M, {Persson} D and {Spindel} P 2008 {\em Living Reviews in
  Relativity\/} {\bf 11} 1--+ (\textit{Preprint} \eprint{0710.1818})

\bibitem{Hestenes2002PointGroups}
Hestenes D 2002 {\em {Point Groups and Space Groups in Geometric Algebra}\/}
  (Birkh\"auser, Boston) pp 3--34

\bibitem{Hestenes2002CrystGroups}
Hestenes D and Holt J~W 2007 {\em Journal of Mathematical Physics\/} {\bf 48}
  023514

\bibitem{Hitzer2010CLUCalc}
Hitzer E and Perwass C 2010 {\em Advances in Applied Clifford Algebras\/} {\bf
  20}(3) 631--658 ISSN 0188-7009 10.1007/s00006-010-0214-z
  \urlprefix\url{http://dx.doi.org/10.1007/s00006-010-0214-z}

\bibitem{Grassmann1844LinealeAusdehnungslehre}
Grassmann H 1844 {\em {Die lineale Ausdehnungslehre}\/} (Otto Wigand, Leipzig)

\bibitem{Hamilton1844}
Hamilton W~R 1844 {\em Philos. Mag.\/} {\bf 25} 489

\bibitem{Clifford1878}
Clifford W 1878 {\em American Journal of Mathematics\/} {\bf 1} pp. 350--358
  ISSN 00029327 \urlprefix\url{http://www.jstor.org/stable/2369379}

\bibitem{Hestenes1966STA}
Hestenes D 1966 {\em Space-Time Algebra\/} (New York: Gordon and Breach)

\bibitem{HestenesSobczyk1984}
Hestenes D and Sobczyk G 1984 {\em {Clifford} algebra to geometric calculus: a
  unified language for mathematics and physics\/} Fundamental theories of
  physics (Dordrecht: Reidel)

\bibitem{Hestenes1990NewFound}
Hestenes D 1999 {\em New foundations for classical mechanics; 2nd ed.\/}
  Fundamental theories of physics (Dordrecht: Kluwer)

\bibitem{LasenbyDoran2003GeometricAlgebra}
Doran C and Lasenby A~N 2003 {\em Geometric Algebra for Physicists\/}
  (Cambridge: Cambridge University Press)

\bibitem{Garling2011}
{Garling } D~J~H 2011 {\em {Clifford Algebras: An Introduction}\/} (Cambridge University Press, Cambridge)

\bibitem{Humphreys1990Coxeter}
Humphreys J~E 1990 {\em Reflection groups and {Coxeter} groups\/} (Cambridge
  University Press, Cambridge)


\end{thebibliography}

\section*{References}

\end{document}